\documentclass[12pt,a4paper]{article}

\usepackage{amsmath,amssymb,amsthm}
\usepackage{bm}
\usepackage{mathtools}

\usepackage[margin=2.5cm]{geometry}
\usepackage{setspace}
\usepackage{parskip}

\usepackage{booktabs}
\usepackage{array}

\usepackage{tikz}
\usepackage{pgfplots}
\pgfplotsset{compat=1.18}
\usetikzlibrary{arrows.meta, patterns, calc, decorations.markings}
\usepgfplotslibrary{fillbetween}

\usepackage[colorlinks=true,linkcolor=blue,citecolor=blue,urlcolor=blue]{hyperref}
\usepackage[capitalise,noabbrev]{cleveref}

\usepackage[authoryear,round]{natbib}
\newtheorem{proposition}{Proposition}

\theoremstyle{definition}
\newtheorem{definition}{Definition}
\theoremstyle{remark}
\newtheorem*{remark}{Remark}

\newcommand{\dd}{\mathrm{d}}

\newcommand{\tw}{\tau_w}

\newcommand{\Gini}{\mathrm{Gini}}

\begin{document}

\title{From Gravity to Confinement:\\ Wealth Redistribution as Optimal
  Drift Design\\ in the Fokker--Planck Framework}

\author{Anders G.\ Fr{\o}seth\thanks{Independent Researcher.
  Email: \href{mailto:indrefjorden@pm.me}{indrefjorden@pm.me}.}}

\date{3 July 2026}

\maketitle

\begin{abstract}
A proportional wealth tax acts as a uniform gravitational field on the
wealth distribution: it shifts the drift of the Fokker--Planck equation
without altering the diffusion, preserving the Gini coefficient at all
finite times.  The same drift-shift symmetry that makes the tax
non-distortionary \citep{Froeseth2026N, Froeseth2026SP} also makes it
non-redistributive through the market channel.  Redistribution requires
breaking this symmetry.  A progressive tax (confining potential) replaces
the Pareto steady state with a thinner-tailed distribution whose Gini is
a closed-form function of the progressivity parameter; source-sink terms
(tax-funded transfers) reshape the density directly.

We formulate optimal redistribution as a control problem for the
Fokker--Planck equation, penalising intervention costs including
migration, evasion, and portfolio distortion.  In general equilibrium
the tax design feeds back through aggregate capital and the production
function, yielding a self-consistent McKean--Vlasov equation with
diminishing returns to progressivity.  The spectral gap of the
Fokker--Planck operator determines convergence speed: progressive taxes
redistribute within policy-relevant timescales, whereas proportional
taxes rely on slow demographic turnover.

\medskip
\noindent
\textbf{Keywords:} Fokker--Planck equation, wealth inequality, Pareto
distribution, Gini coefficient, optimal taxation, progressive wealth tax,
random growth, drift-diffusion, optimal control.

\medskip
\noindent
\textbf{JEL codes:} D31, H21, H23, C61, C65.
\end{abstract}

\section{Introduction}\label{sec:intro}

This paper studies the redistributive consequences of wealth tax design
within the Fokker--Planck framework for the wealth distribution.

The starting point is the neutrality result established in the companion
papers.  \citet{Froeseth2026N} showed that a proportional wealth tax
levied at market value is neutral with respect to portfolio choice:
it reduces expected wealth and risk by the same proportion, leaving
the risk--reward profile of every portfolio unchanged.
\citet{Froeseth2026SP} reformulated this in Fokker--Planck language:
the proportional tax enters as a uniform reduction of the drift
coefficient, preserving the diffusion structure entirely.  This
\emph{drift-shift symmetry} is the mathematical content of neutrality.

Neutrality entails a corollary for redistribution.  The Gini
coefficient is scale-invariant: a uniform drift shift produces a
proportional scaling of all wealth levels at finite times, leaving
relative inequality unchanged.  The Pareto exponent of the
steady-state distribution does eventually respond, but the transition
takes decades to centuries \citep{GabaixEtAl2016}, far beyond typical
policy horizons.  A tax that does not distort therefore does not
redistribute \emph{through the market channel}.

A proportional wealth tax does, however, redistribute through the
\emph{fiscal channel}: tax revenue funds government expenditure that
benefits lower-wealth citizens disproportionately.  In Fokker--Planck
terms, the fiscal channel corresponds to a source-sink modification
(Class~3 in \Cref{sec:taxonomy}), operating outside the drift-shift
symmetry.

The analysis develops the \emph{homogeneous-returns} case: all
investors face a common drift~$v$ and diffusion~$D$.  This is the
setting of the companion papers \citep{Froeseth2026N, Froeseth2026SP}.
Empirical evidence shows that returns are in fact heterogeneous
\citep{FagerengEtAl2020}; \citet{Froeseth2026H} analyses this case and
shows that the redistribution paradox is specific to the homogeneous
limit.  The implications for the results below are discussed in
\Cref{remark:homogeneous_scope} and \S\ref{sec:discussion}.

The Fokker--Planck framework provides five capabilities that motivate
its use throughout this paper.
First, \emph{exact symmetry results}: the drift-shift symmetry of the
Fokker--Planck propagator yields both neutrality and non-redistribution
as theorems, and identifies precisely which symmetry must be broken to
achieve redistribution.
Second, an \emph{intrinsic taxonomy} of tax designs: every modification
of the wealth process maps to a specific term in the equation (drift,
diffusion, source-sink, or boundary condition), each producing a
qualitatively different steady-state distribution.
Third, \emph{convergence rates} from the spectral theory of the
Fokker--Planck operator: the eigenvalues yield explicit timescales for
how fast inequality responds to a policy change---the calculation
underlying the slow transition dynamics of \citet{GabaixEtAl2016}.
Fourth, \emph{unity of theory and estimation}: the Fokker--Planck
equation is the forward equation of the stochastic differential
equation generating the wealth data, so that neural stochastic
differential equation methods~\citep{SongEtAl2021} estimate exactly
the drift and diffusion coefficients that appear in the optimal policy
problem.
Fifth, the general equilibrium extension maps to a
\emph{McKean--Vlasov} equation with established existence and uniqueness
theory, rather than requiring the numerical self-consistency methods
(e.g.\ \citealp{KrusellSmith1998}) standard in macroeconomics.

This paper models both the market and fiscal channels and asks: for a
given level of wealth tax revenue, what is the optimal tax
\emph{design}?  Different tax structures correspond to different
modifications of the Fokker--Planck equation, with qualitatively
different effects on the steady-state wealth distribution:

\begin{itemize}
  \item A \textbf{proportional tax} (uniform drift shift) preserves the
    Pareto tail.  It is the analogue of a uniform gravitational field:
    every particle (investor) is pulled downward at the same rate,
    regardless of position.

  \item A \textbf{progressive tax} (state-dependent drift, increasing with
    wealth) acts as a \emph{confining potential}---a harmonic trap in
    physical language.  It replaces the Pareto steady state with a
    thinner-tailed distribution, potentially Gaussian in the
    Ornstein--Uhlenbeck limit.  This is the Fokker--Planck realisation of
    \citeauthor{Piketty2014}'s (\citeyear{Piketty2014}) proposal.

  \item \textbf{Explicit transfers} (source-sink terms in the
    Fokker--Planck equation) directly reshape the density by removing
    probability current at high wealth and injecting it at low wealth.
    This is the most aggressive intervention, corresponding to a
    universal basic income funded by a wealth tax.

  \item An \textbf{absorbing boundary} (wealth cap) truncates the Pareto
    tail entirely above a threshold, equivalent to a 100\% marginal tax
    rate.
\end{itemize}

For each class, we derive the resulting steady-state distribution and
its Gini coefficient as explicit functions of the policy parameters.
We then formulate the \emph{optimal drift design problem}: find the
modification of the Fokker--Planck drift that minimises the distance
between the resulting wealth distribution and a target, subject to a
regularisation penalty reflecting the economic and political costs of
intervention.  This is a well-posed optimal control problem for a
partial differential equation, drawing on techniques from both physics
and applied mathematics.

The paper connects three bodies of work that have developed largely
independently:

\begin{enumerate}
  \item \citeauthor{Piketty2014}'s (\citeyear{Piketty2014}) thesis that
    the gap between the return on capital $r$ and economic growth $g$
    drives wealth concentration.  In our framework, $r - g$ determines
    the drift of the Fokker--Planck equation: when $r > g$, the drift is
    positive and the Pareto tail fattens.

  \item \citeauthor{GabaixEtAl2016}'s (\citeyear{GabaixEtAl2016})
    analysis of random multiplicative growth as the mechanism generating
    Pareto tails, and their finding that transition dynamics are
    inherently slow.  This provides the baseline Fokker--Planck equation
    and explains why proportional taxes are ineffective on policy-relevant
    timescales.

  \item The econophysics literature on wealth distribution
    \citep{BouchaudMezard2000, DragulescuYakovenko2000,
    ChatterjeeCM2004}, which applies statistical mechanics to economic
    dynamics but has not fully connected to the optimal taxation
    literature.
\end{enumerate}

Empirical evidence from Norwegian wealth register data
\citep{FagerengEtAl2020} shows that returns to wealth are increasing in
wealth level, with an 18~percentage point gap between the 10th and 90th
percentiles of the net worth distribution.  In Fokker--Planck language, the drift $v(x)$ is an
increasing function of log-wealth~$x$---the system has a built-in
anti-redistributive force that a uniform drift shift cannot overcome.
\citet{FagerengGuisoRing2025} further show that portfolio responses to
equity premium changes (induced by Norwegian wealth tax reforms) are
strong but slow, with adjustment taking 5--6 years---direct evidence
of the relaxation timescales predicted by the Fokker--Planck framework.

A growing empirical literature exploits cross-country variation in
wealth taxation to quantify behavioural responses.
\citet{BrulhartEtAl2022} use Swiss cantonal variation to estimate a
taxable wealth elasticity of approximately 0.43, decomposed into
migration, house price capitalisation, and real behavioural response.
\citet{JakobsenEtAl2020} exploit the 1989 Danish tax reform (a
1.2~percentage point cut in the marginal wealth tax rate) to show that
wealth of moderately wealthy households increased by approximately
20\% over eight years, with even larger effects for the very
wealthy.\footnote{The 8-year treatment-on-the-treated effect on log
wealth is 0.186 for couples in the exempted range and 0.312 for
households unbound by the tax ceiling \citep[Table~II]{JakobsenEtAl2020}.}  \citet{MartiMartinezScheuer2023} find that progressive wealth
tax cuts in Switzerland explain 20--25\% of increases in top wealth
concentration.  These empirical magnitudes are precisely the parameters
needed to calibrate the regularisation term in our optimal control
framework (\Cref{sec:optimal}).

The remainder of the paper is organised as follows.
\Cref{sec:baseline} establishes the baseline: random multiplicative
growth with demographic turnover generates a Pareto-tailed steady state
whose exponent is determined by the drift, diffusion, and turnover rate.
\Cref{sec:neutrality_redistribution} proves that a uniform drift shift
(proportional tax) preserves the Gini coefficient at all finite times
through the market channel, and affects the steady state only on
timescales of decades to centuries.
\Cref{sec:confinement} introduces the confining potential (progressive
tax) and derives the resulting steady state and Gini.
\Cref{sec:taxonomy} presents the full taxonomy of Fokker--Planck
modifications for redistribution.
\Cref{sec:optimal} formulates the optimal drift design problem, first
in partial equilibrium and then in general equilibrium where the drift
responds endogenously to the wealth distribution through the production
function.
\Cref{sec:empirical} discusses the empirical strategy for estimating
the current drift-diffusion structure from wealth register data,
including identification of the general equilibrium feedback.
\Cref{sec:discussion} concludes with a discussion of normative
choices, limitations, and connections to the broader wealth tax
programme.

\section{Random growth and the Pareto steady state}\label{sec:baseline}

\subsection{The baseline Fokker--Planck equation}

Consider a population of individuals whose wealth $W_i(t)$ each evolves
according to geometric Brownian motion:
\begin{equation}\label{eq:gbm}
  \frac{\dd W}{W} = \mu \, \dd t + \sigma \, \dd B_t \,,
\end{equation}
where $\mu$ is the expected return on capital, $\sigma > 0$ is the
volatility, and $B_t$ is a standard Brownian motion.  This is the
workhorse model of finance \citep{Merton1969, BlackScholes1973} and
the starting point of the random growth literature on wealth
distribution \citep{Gabaix1999, GabaixEtAl2016, BenhabibBisinZhu2011}.

Applying It\^{o}'s lemma to $x = \ln W$ gives the log-wealth dynamics:
\begin{equation}\label{eq:logwealth}
  \dd x = v \, \dd t + \sigma \, \dd B_t \,,
  \qquad v \equiv \mu - \tfrac{\sigma^2}{2} \,.
\end{equation}
This is a Langevin equation with constant drift $v$ and constant
diffusion coefficient $D = \sigma^2/2$.  The probability density
$\pi(x,t)$ of log-wealth across the population satisfies the
Fokker--Planck equation:
\begin{equation}\label{eq:fp}
  \frac{\partial \pi}{\partial t}
  = -v \frac{\partial \pi}{\partial x}
  + D \frac{\partial^2 \pi}{\partial x^2} \,.
\end{equation}

Without additional mechanisms, the solution to \eqref{eq:fp} is a
Gaussian that drifts to the right (when $v > 0$) and spreads diffusively:
\begin{equation}\label{eq:propagator}
  \pi(x,t) = \frac{1}{\sqrt{4\pi D t}}
  \exp\!\left(-\frac{(x - x_0 - vt)^2}{4Dt}\right) .
\end{equation}
This propagator captures individual-level dynamics well, but it does not
produce the heavy-tailed wealth distributions observed empirically.  A
stationary distribution requires an additional mechanism.

\subsection{Demographic turnover and the Pareto tail}\label{sec:pareto}

Following \citet{Gabaix1999} and \citet{GabaixEtAl2016}, we introduce
demographic turnover: at rate $\delta > 0$, each individual is replaced
by a new entrant drawn from a distribution concentrated near the
population mean.  This adds a source-sink term to the Fokker--Planck
equation:
\begin{equation}\label{eq:fp_turnover}
  \frac{\partial \pi}{\partial t}
  = -v \frac{\partial \pi}{\partial x}
  + D \frac{\partial^2 \pi}{\partial x^2}
  - \delta \pi + \delta \, \phi(x) \,,
\end{equation}
where $\phi(x)$ is the entrant distribution (normalised).  The term
$-\delta\pi$ removes individuals at rate $\delta$; the term
$+\delta\phi$ inserts replacements.

Setting $\partial\pi/\partial t = 0$ and looking for solutions of the
form $\pi_{\mathrm{ss}}(x) \propto e^{-\alpha x}$ in the right tail
(where $\phi(x) \approx 0$) yields the characteristic equation:
\begin{equation}\label{eq:pareto_eq}
  D\alpha^2 - v\alpha - \delta = 0 \,.
\end{equation}
The positive root gives the Pareto exponent:
\begin{equation}\label{eq:pareto_exponent}
  \alpha = \frac{v + \sqrt{v^2 + 4D\delta}}{2D} \,.
\end{equation}
Since $\pi_{\mathrm{ss}}(x) \propto e^{-\alpha x}$ in log-wealth
corresponds to $p(W) \propto W^{-(1+\alpha)}$ in wealth, this is a
Pareto distribution with tail exponent $\alpha$.  Empirically,
$\alpha \approx 1.5$ for wealth distributions in developed countries
\citep{Gabaix2009}.

\subsection{Gini coefficient and the Pareto exponent}

For a pure Pareto distribution with exponent $\alpha > 1$, the Gini
coefficient is:
\begin{equation}\label{eq:gini_pareto}
  \Gini = \frac{1}{2\alpha - 1} \,.
\end{equation}
For $\alpha = 1.5$, this gives $\Gini = 0.5$.  More generally, the
relationship is monotonically decreasing: a larger $\alpha$ (thinner
tail) corresponds to lower inequality.

\begin{remark}[Piketty's $r > g$ in Fokker--Planck language]
\citeauthor{Piketty2014}'s central thesis is that when the return on
capital $r$ exceeds economic growth $g$, wealth concentrates.  In the
Fokker--Planck framework, this translates directly: the drift
$v = \mu - \sigma^2/2$ determines how fast wealth accumulates relative
to the mean, and the Pareto exponent $\alpha$ depends on $v/D$ and
$\delta/D$.  When $v$ is large (high $r - g$), $\alpha$ is large but
the denominator $2D$ is also large if volatility is high.  The precise
balance between return, volatility, and demographic turnover determines
the steady-state level of inequality.  Piketty's proposal of a
progressive capital tax is, in Fokker--Planck language, a proposal to
replace the uniform drift with a \emph{confining potential}---the
subject of \Cref{sec:confinement}.
\end{remark}

\section[Why proportional taxes cannot redistribute through the
  market channel]{Why proportional taxes cannot redistribute\\
  through the market channel}
\label{sec:neutrality_redistribution}

\subsection{The drift-shift symmetry}

A proportional wealth tax at rate $\tw$ reduces the instantaneous return
from $\mu$ to $\mu - \tw$, giving the taxed log-wealth dynamics:
\begin{equation}\label{eq:taxed_drift}
  \dd x = v_\tau \, \dd t + \sigma \, \dd B_t \,,
  \qquad v_\tau \equiv v - \tw = \mu - \tw - \tfrac{\sigma^2}{2} \,.
\end{equation}
The diffusion coefficient $D = \sigma^2/2$ is unchanged.  The taxed
Fokker--Planck equation is identical to \eqref{eq:fp_turnover} with
$v$ replaced by $v_\tau$: a pure drift shift.  This is the
\emph{drift-shift symmetry} of \citet{Froeseth2026SP}.

\subsection{Scale invariance of the Gini coefficient}

The proportional tax scales all wealth by the same factor:
$W_\tau(t) = (1 - \tw)^t \cdot W_0(t)$ relative to the untaxed path,
where $W_0(t)$ is the wealth that would have obtained without the tax.
At any finite time, the taxed wealth distribution is a rescaled version
of the untaxed distribution.

\begin{proposition}[Gini preservation under uniform drift shift]
\label{prop:gini_invariance}
Let $\pi(x,t)$ solve the Fokker--Planck equation \eqref{eq:fp} with
drift $v$ and initial condition $\pi_0(x)$, and let
$\pi_\tau(x,t)$ solve the same equation with drift $v_\tau = v - \tw$.
Then:
\begin{equation}
  \pi_\tau(x,t) = \pi(x + \tw t, \, t) \,.
\end{equation}
In wealth space, this is a multiplicative rescaling:
$p_\tau(W,t) = p(W \cdot e^{\tw t}, t) \cdot e^{\tw t}$.  Since the
Gini coefficient is invariant under multiplicative rescaling of all
wealth levels:
\begin{equation}\label{eq:gini_preserved}
  \Gini[\pi_\tau(\cdot, t)] = \Gini[\pi(\cdot, t)]
  \qquad \forall\, t \geq 0.
\end{equation}
\end{proposition}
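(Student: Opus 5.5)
The plan is to verify the shift identity directly by substitution into \eqref{eq:fp}, then invoke uniqueness for the Cauchy problem, and finally transfer to wealth space and apply scale invariance of the Gini.

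\emph{Step 1 (substitution).} Define $\tilde\pi(x,t) \equiv \pi(x + \tw t, t)$. By the chain rule,
\begin{equation*}
  \partial_t \tilde\pi = (\partial_t \pi)(x+\tw t,t) + \tw\,(\partial_x \pi)(x+\tw t,t),
  \qquad
  \partial_x^k \tilde\pi = (\partial_x^k \pi)(x+\tw t,t).
\end{equation*}
Using \eqref{eq:fp} for $\pi$ at the shifted point gives
$\partial_t\tilde\pi = -(v-\tw)\,\partial_x\tilde\pi + D\,\partial_x^2\tilde\pi$.
This is \eqref{eq:fp} with drift $v_\tau$. Moreover, $\tilde\pi(x,0)=\pi_0(x)$.

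\emph{Step 2 (uniqueness).} Both $\tilde\pi$ and $\pi_\tau$ solve the same constant-coefficient heat-type Cauchy problem with the same initial datum. Within the class of integrable (or tempered) solutions, uniqueness holds, for instance via Fourier transform, since each mode evolves by $\hat\pi(k,t)=\hat\pi_0(k)e^{(-ivk - Dk^2)t}$. Hence $\pi_\tau=\tilde\pi$. Equivalently, one can check the claim directly on the explicit propagator \eqref{eq:propagator} convolved with $\pi_0$: replacing $v$ by $v-\tw$ shifts the Gaussian centre by $-\tw t$. This step is the only point needing any care. The main obstacle is stating the solution class so that uniqueness is legitimate; I would simply restrict to probability densities with finite first moment, where the Fourier argument is clean.

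\emph{Step 3 (wealth space and Gini).} With $W=e^x$, we have $p(W,t)=\pi(\ln W,t)/W$. Then
$p_\tau(W,t)=\pi(\ln W+\tw t,t)/W = p(We^{\tw t},t)\,e^{\tw t}$.
This says the law of $W$ under the tax equals the law of $e^{-\tw t}W$ without the tax. The Gini coefficient, written as $\Gini = \mathbb{E}|W-W'|/(2\,\mathbb{E}W)$ with $W,W'$ i.i.d., is homogeneous of degree zero under $W\mapsto cW$ for $c>0$. Applying this with $c=e^{-\tw t}$ yields \eqref{eq:gini_preserved} for every $t\ge0$, provided $\mathbb{E}W<\infty$. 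I would note that this finiteness is inherited from $\pi_0$ through Gaussian convolution whenever $\int e^{x}\pi_0\,\dd x<\infty$.
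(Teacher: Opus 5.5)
Your proposal is correct and follows the paper's proof: the same substitution $\tilde\pi(x,t)=\pi(x+\tw t,t)$ checked by the chain rule, followed by scale invariance of the Gini. Your additions make explicit what the paper leaves implicit: checking $\tilde\pi(\cdot,0)=\pi_0$, invoking uniqueness of the Cauchy problem to identify $\tilde\pi$ with $\pi_\tau$, carrying out the change of variables to wealth space, and requiring $\int e^{x}\pi_0\,\dd x<\infty$ so that the Gini is defined.
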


\begin{proof}
Define $\tilde{\pi}(x,t) = \pi(x + \tw t, t)$.  By direct
computation:
\begin{align}
  \frac{\partial \tilde{\pi}}{\partial t}
  &= \tw \frac{\partial \pi}{\partial x}\bigg|_{x+\tw t}
     + \frac{\partial \pi}{\partial t}\bigg|_{x+\tw t} \notag \\
  &= \tw \frac{\partial \pi}{\partial x}\bigg|_{x+\tw t}
     - v \frac{\partial \pi}{\partial x}\bigg|_{x+\tw t}
     + D \frac{\partial^2 \pi}{\partial x^2}\bigg|_{x+\tw t} \notag \\
  &= -(v - \tw) \frac{\partial \tilde{\pi}}{\partial x}
     + D \frac{\partial^2 \tilde{\pi}}{\partial x^2} \,.
\end{align}
So $\tilde{\pi}$ solves the Fokker--Planck equation with drift
$v - \tw = v_\tau$.  Since the Gini coefficient depends only on the
shape of the Lorenz curve, which is invariant under translations in
log-wealth (equivalently, multiplicative rescaling of wealth),
\eqref{eq:gini_preserved} follows.
\end{proof}

\subsection{The steady-state response is real but slow}

While the Gini is preserved at all finite times, the \emph{steady-state}
Pareto exponent does change:
\begin{equation}\label{eq:alpha_taxed}
  \alpha_\tau
  = \frac{v_\tau + \sqrt{v_\tau^2 + 4D\delta}}{2D}
  = \frac{(v - \tw) + \sqrt{(v - \tw)^2 + 4D\delta}}{2D} \,.
\end{equation}
If $\tw > 0$ reduces $v_\tau$ relative to $v$, the steady-state
$\alpha_\tau$ decreases (thicker tail, higher steady-state Gini) when
the drift remains positive, and increases when the drift turns negative.
But as shown by \citet{GabaixEtAl2016}, the convergence to this new
steady state is governed by the spectral gap of the Fokker--Planck
operator, which is of order $\delta$ (the demographic turnover rate).
For realistic parameters ($\delta \approx 1/30$ per year), the
half-life of the transition is:
\begin{equation}\label{eq:halflife}
  t_{1/2} = \frac{\ln 2}{\Lambda} \,,
  \qquad \Lambda \approx \delta + \frac{v_\tau^2}{4D}
  \qquad (\text{for } v_\tau < 0) \,,
\end{equation}
where $\Lambda$ denotes the spectral gap of the Fokker--Planck
operator (not to be confused with the regularisation parameter
$\lambda$ introduced in \Cref{sec:optimal}).
For a $2\%$ tax with $\sigma = 30\%$ and $\mu = 8\%$, this gives
$t_{1/2} \approx 21$ years.  The distribution responds, but on
timescales far beyond typical electoral cycles.

\begin{remark}[The redistribution paradox]
\Cref{prop:gini_invariance} gives the market-channel result:
the drift-shift symmetry that makes the proportional tax
non-distortionary also makes it non-redistributive.  As
established in the introduction, the fiscal channel
(government spending funded by tax revenue) operates as a
source-sink term outside this symmetry and \emph{is}
redistributive---see Class~3 in \Cref{sec:taxonomy}.  The
paradox, then, is precise: the symmetry that makes the tax
optimal for portfolio neutrality is exactly the symmetry that
makes the market channel powerless.  The remainder of this
paper asks how to break this symmetry---through progressive
design of the market channel---while preserving the fiscal
channel that a proportional base already provides.

Breaking the drift-shift symmetry through progressivity
(\Cref{sec:confinement}) adds a second, market-based redistribution
mechanism on top of the fiscal channel.
\end{remark}

\begin{remark}[Scope: homogeneous returns]
\label{remark:homogeneous_scope}
As stated in the introduction, \Cref{prop:gini_invariance} and the
redistribution paradox rely on the homogeneous-returns assumption.
When persistent ability $z$ generates an investor-specific drift
$v(z)$, the uniform shift $v(z) \to v(z) - \tw$ imposes heterogeneous
\emph{relative} burdens: low-ability investors may face negative net
drift while high-ability investors remain in positive territory.
Over time, wealth reallocates from low- to high-ability types---a
market-channel redistribution mechanism absent in the homogeneous case.

This ``use-it-or-lose-it'' effect \citep{Guvenen2023, Froeseth2026H}
means the redistribution paradox is specific to the homogeneous limit.
Under heterogeneous returns, a proportional wealth tax \emph{is}
redistributive through the market channel---not by compressing the
distribution (which requires the confining potential of
\Cref{sec:confinement}), but by accelerating the reallocation of
capital toward higher-ability investors.  The optimal drift design
framework of \Cref{sec:optimal} remains the appropriate tool for
\emph{active} redistribution beyond this passive reallocation effect.
\end{remark}

\section{From gravity to confinement: the progressive tax}
\label{sec:confinement}

\subsection{The confining potential}

The proportional tax enters the Fokker--Planck equation as a uniform
drift shift $v \to v - \tw$---a constant external force, analogous to
a uniform gravitational field.  Gravity shifts the distribution without
confining it.

A \emph{progressive} wealth tax, where the effective rate increases with
wealth, introduces a \emph{state-dependent} drift:
\begin{equation}\label{eq:progressive}
  v(x) = v_0 - \kappa(x - \bar{x}) \,,
\end{equation}
where $v_0 = \mu - \sigma^2/2 - \tw^{(0)}$ is the drift at the
reference log-wealth $\bar{x}$, $\tw^{(0)}$ is the base tax rate,
and $\kappa > 0$ is the \emph{progressivity parameter}.  The term
$-\kappa(x - \bar{x})$ is a \emph{restoring force}: it pushes
high-wealth individuals ($x > \bar{x}$) leftward more strongly than
low-wealth individuals.

In physics, this is a \emph{harmonic potential}: the force increases
linearly with displacement from the reference point.  The resulting
dynamics are those of an Ornstein--Uhlenbeck (OU) process:
\begin{equation}\label{eq:ou}
  \dd x = [v_0 - \kappa(x - \bar{x})] \, \dd t + \sigma \, \dd B_t \,.
\end{equation}

\subsection{The Ornstein--Uhlenbeck steady state}

The Fokker--Planck equation for \eqref{eq:ou} is:
\begin{equation}\label{eq:fp_ou}
  \frac{\partial \pi}{\partial t}
  = -\frac{\partial}{\partial x}
    \bigl\{[v_0 - \kappa(x - \bar{x})]\,\pi\bigr\}
  + D \frac{\partial^2 \pi}{\partial x^2} \,.
\end{equation}
Setting $\partial\pi/\partial t = 0$ and solving gives the
steady-state distribution:
\begin{equation}\label{eq:ou_ss}
  \pi_{\mathrm{ss}}(x)
  = \frac{1}{\sqrt{2\pi \cdot D/\kappa}}
  \exp\!\left(
    -\frac{(x - \bar{x} - v_0/\kappa)^2}{2D/\kappa}
  \right) .
\end{equation}
This is a \emph{Gaussian} in log-wealth, centred at
$x^* = \bar{x} + v_0/\kappa$ with variance $\sigma_{\mathrm{ss}}^2
= D/\kappa$.

The transformation is dramatic: the Pareto tail (power-law decay)
is replaced by a Gaussian tail (exponential-quadratic decay).  In
wealth space, the steady state becomes \emph{log-normal}---a
distribution with much thinner tails than the Pareto.

\subsection{Gini coefficient under confinement}

For a log-normal distribution with log-variance $\sigma_{\mathrm{ss}}^2
= D/\kappa$, the Gini coefficient is:
\begin{equation}\label{eq:gini_lognormal}
  \Gini_{\mathrm{OU}}
  = 2\Phi\!\left(\frac{\sigma_{\mathrm{ss}}}{\sqrt{2}}\right) - 1
  = 2\Phi\!\left(\sqrt{\frac{D}{2\kappa}}\right) - 1 \,,
\end{equation}
where $\Phi$ is the standard normal CDF.

This is a monotonically decreasing function of the progressivity
parameter $\kappa$: more progressivity confines the distribution more
tightly, reducing the Gini.

\begin{proposition}[Gini as a function of progressivity]
\label{prop:gini_kappa}
For the Ornstein--Uhlenbeck wealth dynamics \eqref{eq:ou} with
diffusion coefficient $D = \sigma^2/2$ and progressivity $\kappa > 0$:
\begin{enumerate}
  \item[(i)] $\Gini_{\mathrm{OU}} \to 0$ as $\kappa \to \infty$
    (complete equality in the limit of infinite progressivity).
  \item[(ii)] $\Gini_{\mathrm{OU}} \to 1$ as $\kappa \to 0^+$
    (the distribution spreads without bound, approaching maximal
    inequality).
  \item[(iii)] For a target Gini $G^* \in (0,1)$, the required
    progressivity is:
    \begin{equation}\label{eq:kappa_target}
      \kappa^* = \frac{D}{2\bigl[\Phi^{-1}\!\bigl(\tfrac{1+G^*}{2}
        \bigr)\bigr]^2} \,.
    \end{equation}
\end{enumerate}
\end{proposition}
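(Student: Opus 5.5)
The argument runs entirely through the closed form \eqref{eq:gini_lognormal}, so I will first make that formula self-contained and then read off (i)--(iii) from properties of the standard normal CDF. By \eqref{eq:ou_ss}, the steady-state log-wealth is Gaussian with variance $\sigma_{\mathrm{ss}}^2 = D/\kappa$. So $W = e^x$ is log-normal, and its Gini coefficient is the standard expression $2\Phi(\sigma_{\mathrm{ss}}/\sqrt{2}) - 1$.

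To recall where that expression comes from, I would use the representation $\Gini = P(|W_1 - W_2| \cdot)/(2\mathbb{E}W)$ with two independent copies. Alternatively, I would tilt the measure by $e^{x}$: the Lorenz curve of a log-normal is $L(p) = \Phi(\Phi^{-1}(p) - \sigma_{\mathrm{ss}})$. Then
\[
\Gini = 1 - 2\int_0^1 L(p)\,\dd p = 1 - 2\,P\bigl(Z_2 \le Z_1 - \sigma_{\mathrm{ss}}\bigr),
\]
where $Z_1, Z_2$ are i.i.d.\ standard normals. Since $Z_1 - Z_2 \sim N(0,2)$, this probability equals $\Phi(-\sigma_{\mathrm{ss}}/\sqrt{2})$, and the formula follows. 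The location $x^*$ plays no role, consistent with the scale invariance used in \Cref{prop:gini_invariance}.

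With $g(\kappa) = 2\Phi(\sqrt{D/(2\kappa)}) - 1$, the map $\kappa \mapsto \sqrt{D/(2\kappa)}$ is a continuous, strictly decreasing bijection from $(0,\infty)$ onto $(0,\infty)$. Also, $s \mapsto 2\Phi(s) - 1$ is a continuous, strictly increasing bijection from $(0,\infty)$ onto $(0,1)$, with limits $0$ at $s = 0^+$ and $1$ as $s \to \infty$. Composing the two, $g$ is a strictly decreasing bijection $(0,\infty) \to (0,1)$. Part (i) follows because $\sqrt{D/(2\kappa)} \to 0$ as $\kappa \to \infty$, giving $g \to 2\Phi(0) - 1 = 0$. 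Part (ii) follows because $\sqrt{D/(2\kappa)} \to \infty$ as $\kappa \to 0^+$, giving $g \to 1$.

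For (iii), bijectivity guarantees a unique $\kappa^*$ for each $G^* \in (0,1)$. To find it, I solve $2\Phi(s) - 1 = G^*$, which gives $s = \Phi^{-1}\bigl((1+G^*)/2\bigr)$. This value is strictly positive because $(1+G^*)/2 > 1/2$, so squaring is legitimate and loses no information. Setting $D/(2\kappa^*) = s^2$ then yields \eqref{eq:kappa_target}.

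The only genuinely non-routine step is the log-normal Gini identity; everything else is monotonicity and limits. If the paper treats \eqref{eq:gini_lognormal} as given, the proof reduces to the composition argument above. One modelling point should be flagged: for (ii), the steady state exists only for $\kappa > 0$, so the limit $\kappa \to 0^+$ is a limit along the family of steady states, not a statement about the $\kappa = 0$ dynamics, which have no stationary distribution.
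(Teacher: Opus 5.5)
Your proof is correct and follows the paper's route: the limits of $\sigma_{\mathrm{ss}}=\sqrt{D/\kappa}$ give (i) and (ii), and inverting $2\Phi(\sqrt{D/(2\kappa)})-1=G^*$ gives (iii); beyond the paper's two-line proof, you also derive the log-normal Gini identity, prove strict monotonicity and bijectivity (so $\kappa^*$ exists and is unique), and check that $\Phi^{-1}\bigl(\tfrac{1+G^*}{2}\bigr)>0$ before squaring. One small slip: the aside ``$\Gini = P(|W_1-W_2|\cdot)/(2\mathbb{E}W)$'' is garbled and should read $\mathbb{E}|W_1-W_2|/(2\mathbb{E}W)$, but your argument does not use it.
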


\begin{proof}
Part (i): As $\kappa \to \infty$, $\sigma_{\mathrm{ss}}^2 = D/\kappa
\to 0$, so the log-normal concentrates at a point and
$\Gini \to 0$.  Part (ii): As $\kappa \to 0^+$, $\sigma_{\mathrm{ss}}
\to \infty$ and $\Gini \to 1$.  Part (iii): Solve
$2\Phi(\sqrt{D/(2\kappa)}) - 1 = G^*$ for $\kappa$.
\end{proof}

\begin{remark}[Numerical illustration]
For typical parameters ($\sigma = 0.30$, so $D = 0.045$), achieving
the current empirical Gini of $\approx 0.80$ requires
$\kappa \approx 0.014$; reducing the Gini to $0.50$ (the level of a
pure exponential distribution, \`{a} la \citealt{DragulescuYakovenko2000})
requires $\kappa \approx 0.050$; and reaching the Scandinavian income
Gini of $\approx 0.27$ would require $\kappa \approx 0.19$.  These
values correspond to marginal tax rate increases of approximately
$1.4\%$, $5.0\%$, and $19\%$ per unit of log-wealth, respectively.
\end{remark}

\subsection{Convergence speed: the advantage of confinement}

The OU process has a well-defined spectral gap equal to $\kappa$: all
deviations from the steady state decay at rate $e^{-\kappa t}$.  The
half-life of convergence is:
\begin{equation}\label{eq:ou_halflife}
  t_{1/2}^{\mathrm{OU}} = \frac{\ln 2}{\kappa} \,.
\end{equation}
For $\kappa = 0.05$, this is $t_{1/2} \approx 14$ years---comparable
to the demographic turnover timescale, but now driven by the \emph{tax
itself} rather than by births and deaths.  The confining potential
actively compresses the distribution, rather than passively waiting for
demographic replacement.

This is a qualitative difference from the proportional tax.
\Cref{fig:convergence} compares the convergence trajectories: the
proportional tax relies on demographic turnover (slow), while the
progressive tax creates its own convergence mechanism (fast).

\begin{figure}[t]
\centering
\begin{tikzpicture}
\begin{axis}[
  width=0.85\textwidth,
  height=0.50\textwidth,
  xlabel={Years after tax introduction},
  ylabel={Distance from target Gini (normalised)},
  domain=0:60,
  samples=200,
  ymin=0,
  ymax=1.05,
  xmin=0,
  xmax=60,
  legend style={
    at={(0.97,0.97)},
    anchor=north east,
    font=\small,
    draw=black!50,
    fill=white,
    fill opacity=0.9,
  },
  every axis plot/.append style={thick},
  axis lines=left,
  clip=true,
  tick label style={font=\small},
  label style={font=\small},
  ytick={0, 0.25, 0.5, 0.75, 1.0},
  grid=major,
  grid style={gray!20},
]

\addplot[blue!80!black, solid]
  {exp(-0.033*x)};
\addlegendentry{Proportional ($\Lambda = \delta = 0.033$)}

\addplot[red!80!black, dashed]
  {exp(-0.05*x)};
\addlegendentry{Progressive ($\Lambda = \kappa = 0.05$)}

\addplot[black!70, densely dotted]
  {exp(-0.10*x)};
\addlegendentry{Strongly progressive ($\Lambda = 0.10$)}

\fill[orange!15] (axis cs:0,0) rectangle (axis cs:8,1.05);
\node[font=\scriptsize, black!60, rotate=90, anchor=south]
  at (axis cs:4,0.52) {policy cycle};

\end{axis}
\end{tikzpicture}
\caption{Convergence to target wealth distribution under different tax
  structures.  The proportional tax (blue, solid) converges at the
  demographic turnover rate $\delta$; the progressive tax (red, dashed)
  converges at the progressivity rate $\kappa$; a more strongly
  progressive tax (black, dotted) converges faster still.  The shaded
  region marks a typical electoral cycle (4--8 years).  The progressive
  tax achieves meaningful redistribution within policy-relevant
  timescales; the proportional tax does not.}
\label{fig:convergence}
\end{figure}

\section{Taxonomy of Fokker--Planck modifications for redistribution}
\label{sec:taxonomy}

We now present the full classification of policy tools by their
Fokker--Planck modification, steady-state distribution, and effect on
inequality.

\subsection{Class 1: Uniform drift shift (proportional tax)}

\textbf{FP modification:} $v \to v - \tw$ (constant shift).

\textbf{Steady state:} Pareto with modified exponent $\alpha_\tau$
given by \eqref{eq:alpha_taxed}.

\textbf{Gini:} Preserved at finite times; slow convergence to
$1/(2\alpha_\tau - 1)$.

\textbf{Physics analogue:} Uniform gravitational field.

\subsection{Class 2: State-dependent drift (progressive tax)}

\textbf{FP modification:} $v \to v_0 - \kappa(x - \bar{x})$
(linear restoring force).

\textbf{Steady state:} Log-normal (Gaussian in log-wealth) with
variance $D/\kappa$.

\textbf{Gini:} $2\Phi(\sqrt{D/(2\kappa)}) - 1$, continuously
adjustable via $\kappa$.

\textbf{Physics analogue:} Harmonic potential / Ornstein--Uhlenbeck
confinement.

\textbf{Convergence rate:} $\kappa$ (actively driven by the tax).

\subsection{Class 3: Source-sink redistribution (transfers)}

\textbf{FP modification:} Add $-\gamma(x)\pi + S(x)$ where $\gamma(x)$
is a wealth-dependent extraction rate and $S(x)$ is a transfer
injection.

\textbf{Steady state:} Depends on the specific forms of $\gamma$ and
$S$.  For a flat UBI funded by progressive extraction, the steady state
acquires a lower bound (minimum wealth) and a thinner upper tail.

\textbf{Gini:} Can be reduced to any target level by choosing
$\gamma$ and $S$ appropriately.

\textbf{Physics analogue:} Particle removal and injection; coupled
reservoirs.

\textbf{Note:} This class captures the \emph{fiscal channel} of
wealth taxation.  Any wealth tax---including a proportional one---generates
revenue that funds government expenditure.  If that expenditure
benefits citizens at lower wealth levels disproportionately (through
public services, welfare transfers, or infrastructure), the net effect
is a source-sink modification of the wealth distribution.  In this
sense, even the neutral proportional tax of Class~1 is redistributive
when combined with its fiscal counterpart: the drift shift preserves
the market-channel Gini, but the tax-and-transfer flow reduces the
\emph{total} Gini.  The revenue from a 2\% wealth tax on
billionaires, as proposed by \citet{SaezZucman2019} and
\citet{Zucman2024}, would fund substantial source terms at low $x$.

\subsection{Class 4: Absorbing boundary (wealth cap)}

\textbf{FP modification:} $\pi(x^*, t) = 0$ for $x > x^*$, with
probability current at $x^*$ recycled to lower wealth.

\textbf{Steady state:} Truncated distribution with hard upper bound.

\textbf{Gini:} Bounded above by $\Gini_{\mathrm{Pareto}}(x^*)$, the
Gini of a Pareto distribution truncated at $x^*$.

\textbf{Physics analogue:} Absorbing boundary with particle recycling.

\subsection{Class 5: State-dependent diffusion (volatility policy)}

\textbf{FP modification:} $D \to D(x)$ with $D(x)$ decreasing at
high wealth.

\textbf{Steady state:} Thinner tail than Pareto (the tail exponent
increases where $D$ decreases).

\textbf{Gini:} Reduced, but the policy instrument is indirect (reducing
risk for the wealthy, e.g.\ through portfolio insurance or
diversification mandates).

\textbf{Physics analogue:} Spatially varying temperature; cooler
regions have less fluctuation.

\subsection{Summary}

\begin{table}[t]
\centering
\caption{Taxonomy of Fokker--Planck modifications for redistribution.
  Each row represents a class of policy intervention, characterised by
  the modification it introduces into the Fokker--Planck equation and
  its effect on the steady-state wealth distribution.}
\label{tab:taxonomy}
\renewcommand{\arraystretch}{1.4}
\small
\begin{tabular}{@{}llll@{}}
\toprule
\textbf{Policy class} & \textbf{FP modification} &
  \textbf{Steady state} & \textbf{Gini effect} \\
\midrule
Proportional tax & $v \to v - \tw$ &
  Pareto (shifted $\alpha$) & Preserved (finite $t$) \\
Progressive tax & $v(x) = v_0 - \kappa x$ &
  Log-normal & Controlled by $\kappa$ \\
Transfers (UBI) & Source--sink terms &
  Bounded below & Strongly reduced \\
Wealth cap & Absorbing boundary &
  Truncated Pareto & Hard upper bound \\
Volatility policy & $D \to D(x)$ &
  Thinner tail & Indirectly reduced \\
\bottomrule
\end{tabular}
\end{table}

\subsection{Leakage channels: migration and evasion}\label{sec:leakage}

The five classes above describe what the policymaker \emph{intends} to do
to the Fokker--Planck equation.  In practice, two channels cause the
\emph{realised} modification to differ from the intended one.

\paragraph{Migration (permeable boundary).}
If agents can relocate to avoid the tax, the confining potential
$v(x) = v_0 - \kappa(x - \bar{x})$ operates only up to some threshold
$x_m$ beyond which agents exit the jurisdiction.  In Fokker--Planck
language, this replaces the reflecting boundary at infinity with a
\emph{partially absorbing boundary}:
\begin{equation}\label{eq:migration}
  J(x_m, t) = \gamma \, \pi(x_m, t) \,,
\end{equation}
where $J$ is the probability current and $\gamma > 0$ is a migration
rate that increases with the tax burden at $x_m$.
\citet{KlevenEtAl2025} document significant migration responses to
wealth taxation; \citet{Pichet2007} estimates that the French ISF
caused approximately \texteuro200B in capital flight since the tax's creation in 1988.
The consequence is a truncated steady state: the Pareto tail is cut
off not by policy design (Class~4) but by agent exit, and the
effective Gini reduction is smaller than the closed-boundary
prediction.

\paragraph{Tax evasion and avoidance (attenuated drift).}
If agents can hide or reclassify wealth, the intended progressivity
$\kappa$ is not fully realised.  Let $\epsilon(x)$ denote the
evasion rate---the fraction of wealth at position $x$ that escapes
the tax base.  The effective drift modification becomes:
\begin{equation}\label{eq:evasion}
  \delta v_{\mathrm{eff}}(x) = -[1 - \epsilon(x)]\,\kappa\,(x - \bar{x}) \,,
\end{equation}
so the effective spring constant is $\kappa_{\mathrm{eff}}(x) =
[1 - \epsilon(x)]\,\kappa < \kappa$.
\citet{AlstadsaeterJohannesenZucman2019} estimate $\epsilon \approx
0.25$ for the top 0.01\%; \citet{BjornebyEtAl2023} document avoidance
through corporate restructuring that creates a wedge between intended
and realised drift.  Since $\epsilon(x)$ is typically increasing in
$x$ (richer agents have more avoidance opportunities), the effective
confining force weakens precisely where it is most needed.

\paragraph{Debt and leverage (amplified assessment distortion).}
When the wealth tax base is net wealth (assets minus liabilities) and
assets are assessed below market value while debt is deducted at face
value, the tax base diverges from market net wealth.  An investor
with assets of market value~$A$, assessment ratio~$\beta < 1$, and
debt~$D$ has taxable net wealth $W_{\mathrm{tax}} = \beta A - D$
versus market net wealth $W_{\mathrm{market}} = A - D$.  As shown in
\citet{Froeseth2026SP}, the ratio
$W_{\mathrm{tax}} / W_{\mathrm{market}}$ decreases with leverage and
can become negative.  This creates an incentive to borrow against
underassessed assets (notably real estate), which affects the
Fokker--Planck dynamics in two ways: (i)~the effective drift depends
on the leverage ratio, not only on wealth level, coupling the debt
decision to the wealth dynamics; and (ii)~leveraged positions amplify
the effective volatility of net wealth, making the diffusion
coefficient $D$ state-dependent through an endogenous mechanism.
Unlike evasion, which attenuates the intended drift, the leverage
channel can \emph{amplify} the deviation between the observed taxable
distribution and the true market-value distribution.

\begin{remark}[Leakage in the taxonomy]
Migration modifies the \emph{boundary condition} of the
Fokker--Planck equation; evasion modifies the \emph{drift
coefficient}; debt leverage modifies both the \emph{drift} (through
the leverage ratio) and the \emph{diffusion} (through amplified
volatility).  All three reduce the effective redistribution below the
designed level, but through mathematically distinct channels.  The
diffusion model approach of \Cref{sec:empirical} captures all three
implicitly: the learned drift $\hat{v}(x)$ reflects the
\emph{realised} drift including evasion and leverage effects, and
the learned boundary behaviour reflects migration.
\end{remark}

\section{Optimal drift design}\label{sec:optimal}

\subsection{The control problem}

We now formulate the redistribution problem as an optimal control
problem for the Fokker--Planck equation.  The state is the wealth
density $\pi(x,t)$; the control is the drift modification
$\delta v(x)$; the objective is to steer the distribution toward a
target $\pi^*(x)$ with prescribed inequality level.

\begin{definition}[Optimal drift design problem]
\label{def:optimal}
Given a baseline Fokker--Planck equation with drift $v$ and diffusion
$D$, a target distribution $\pi^*$, a time horizon $T$, and a
regularisation parameter $\lambda > 0$, find the drift modification
$\delta v(x)$ that minimises:
\begin{equation}\label{eq:objective}
  \mathcal{J}[\delta v]
  = \frac{1}{2}\int_{-\infty}^{\infty}
    \bigl|\pi(x,T) - \pi^*(x)\bigr|^2 \, \dd x
  + \frac{\lambda}{2}\int_{-\infty}^{\infty}
    |\delta v(x)|^2 \, \dd x \,,
\end{equation}
subject to the controlled Fokker--Planck equation:
\begin{equation}\label{eq:fp_controlled}
  \frac{\partial \pi}{\partial t}
  = -\frac{\partial}{\partial x}
    \bigl\{[v + \delta v(x)]\,\pi\bigr\}
  + D \frac{\partial^2 \pi}{\partial x^2}
  - \delta\pi + \delta\phi \,,
\end{equation}
with initial condition $\pi(x,0) = \pi_0(x)$.
\end{definition}

The first term in \eqref{eq:objective} measures how close the resulting
distribution is to the target; the second term penalises the magnitude
of the intervention.  The regularisation parameter $\lambda$ captures
the trade-off between redistributive effectiveness and the economic and
political costs of intervention.

\subsection{Necessary conditions}

Applying Pontryagin's maximum principle, the optimal drift modification
satisfies:
\begin{equation}\label{eq:optimality}
  \delta v^*(x) = -\frac{1}{\lambda} \pi(x,T) \cdot p(x,T) \,,
\end{equation}
where $p(x,t)$ is the adjoint variable solving the backward equation:
\begin{equation}\label{eq:adjoint}
  -\frac{\partial p}{\partial t}
  = [v + \delta v(x)] \frac{\partial p}{\partial x}
  + D \frac{\partial^2 p}{\partial x^2} - \delta p \,,
\end{equation}
with terminal condition $p(x,T) = \pi(x,T) - \pi^*(x)$.

This is a coupled forward-backward PDE system: the forward equation
propagates the density $\pi$; the backward equation propagates the
adjoint $p$; and the optimal control $\delta v^*$ depends on both.
Standard iterative methods (gradient descent on $\delta v$, or
forward-backward splitting) converge to the solution.

\subsection{The linear-quadratic case}

When the target distribution is log-normal (corresponding to a Gaussian
in log-wealth), and we restrict the control to the linear class
$\delta v(x) = -\kappa(x - \bar{x}) - c$ for constants $\kappa, c$,
the problem reduces to choosing $\kappa$ and $c$ optimally.  The
solution follows from \Cref{prop:gini_kappa}: for a given target Gini
$G^*$, the optimal progressivity is $\kappa^*$ given by
\eqref{eq:kappa_target}, and $c$ adjusts the mean.  This is the
linear-quadratic regulator (LQR) for the Fokker--Planck equation.

\subsection{The regularisation parameter as political feasibility}

The parameter $\lambda$ in \eqref{eq:objective} deserves interpretation.
A large $\lambda$ penalises intervention heavily, yielding a
near-proportional (near-neutral) tax.  A small $\lambda$ permits
aggressive redistribution.  In practice, $\lambda$ encodes:

\begin{itemize}
  \item \textbf{Distortion costs:} The welfare loss from portfolio
    distortion and behavioural responses (\emph{cf.}\ the
    distortion channels of \citealp{Froeseth2026E}).  Empirical
    estimates from Switzerland \citep{BrulhartEtAl2022} and Denmark
    \citep{JakobsenEtAl2020} provide calibration targets.

  \item \textbf{Leakage costs:} The migration and evasion channels
    formalised in \Cref{sec:leakage} reduce the effective
    progressivity below the designed level.  The permeable boundary
    \eqref{eq:migration} truncates the tail through agent exit; the
    attenuated drift \eqref{eq:evasion} weakens the confining force.
    These channels do not simply increase costs---they \emph{modify
    the control problem itself}: the effective control is
    $\kappa_{\mathrm{eff}} = [1 - \epsilon]\,\kappa$, so the
    optimality condition \eqref{eq:optimality} must be evaluated at
    the realised drift, not the legislated one.

  \item \textbf{Administrative costs:} Progressive taxes require
    valuation of all assets at market value, which may be infeasible
    for illiquid assets.

  \item \textbf{Political constraints:} The political economy of
    taxation limits how progressive the tax can be.
\end{itemize}

The optimal control framework makes this trade-off explicit and
quantifiable, rather than leaving it to qualitative argument.  In
particular, the leakage channels imply that the policymaker must
over-design $\kappa$ relative to the closed-system optimum to
compensate for the attenuation by $\epsilon(x)$ and the probability
current lost through migration at $x_m$.

\subsection[General equilibrium: the self-consistent Fokker--Planck
  equation]{General equilibrium: the self-consistent\\
  Fokker--Planck equation}\label{sec:ge}

The analysis so far treats the pre-tax drift $v$ as an exogenous
parameter.  This is the partial equilibrium assumption: the return
on capital $r$ (and hence $v = r - \sigma^2/2$) does not respond to
the tax.  But in general equilibrium, $r$ depends on the aggregate
capital stock $K$, which is itself determined by the wealth
distribution~$\pi$.  The pre-tax return and economic growth $g$ are
jointly determined by technology, preferences, and policy---a point
emphasised by \citet{Jones2015} in the context of Piketty's $r > g$
thesis.

To close the model, we introduce a production function
$Y = F(K, L)$ with diminishing returns to capital.  The marginal
product of capital gives the pre-tax return:
\begin{equation}\label{eq:mpk}
  r = F_K(K, L) \,,
  \qquad K[\pi] = \int_0^\infty W \, p(W) \, \dd W
  = \int_{-\infty}^{\infty} e^x \, \pi(x) \, \dd x \,,
\end{equation}
where $K[\pi]$ is a functional of the wealth density.  The growth
rate $g$ may also depend on $K$ (through capital deepening), so both
$r$ and $g$ respond endogenously to changes in $\pi$.

The drift becomes a functional of the distribution:
\begin{equation}\label{eq:drift_ge}
  v[\pi] = r\bigl(K[\pi]\bigr) - \tfrac{\sigma^2}{2} \,.
\end{equation}
The self-consistent Fokker--Planck equation is then:
\begin{equation}\label{eq:fp_sc}
  \frac{\partial \pi}{\partial t}
  = -\frac{\partial}{\partial x}
    \Bigl\{\bigl[v[\pi] + \delta v(x)\bigr]\,\pi\Bigr\}
  + D \frac{\partial^2 \pi}{\partial x^2}
  - \delta\pi + \delta\phi \,,
\end{equation}
where the drift $v[\pi]$ is itself determined by $\pi$ through
\eqref{eq:mpk}--\eqref{eq:drift_ge}.  This is a \emph{nonlinear}
Fokker--Planck equation: the drift depends on the solution.

\begin{remark}[Mean-field structure]
Equation~\eqref{eq:fp_sc} has the structure of a \emph{McKean--Vlasov}
or mean-field equation, familiar from statistical physics: each
particle (investor) experiences a drift that depends on the aggregate
state of the population.  The self-consistency condition
$v = v[\pi]$ is analogous to the self-consistent field in
Hartree--Fock theory or the mean-field approximation in spin systems.
\citet{BernardBouchaudLeDoussal2026} solve this class of model
explicitly for the case of heterogeneous growth rates with mean-field
redistribution, identifying a phase transition between localised
(condensed) and delocalised (spread) wealth distributions at a critical
redistribution intensity.
\end{remark}

\subsubsection{Partial versus general equilibrium: qualitative
  differences}

The GE feedback introduces a countervailing force.  In partial
equilibrium, a confining potential with progressivity $\kappa$
monotonically reduces the Gini: more $\kappa$ always means less
inequality.  In general equilibrium, increasing $\kappa$ compresses
the wealth distribution, reducing aggregate capital $K$.  By
diminishing returns, $r = F_K(K,L)$ \emph{rises}, increasing the
baseline drift $v$.  This partially offsets the confinement:
\begin{equation}\label{eq:ge_feedback}
  \kappa \uparrow
  \;\Rightarrow\; K \downarrow
  \;\Rightarrow\; r \uparrow
  \;\Rightarrow\; v \uparrow
  \;\Rightarrow\; \text{partial offset of confinement} \,.
\end{equation}
\begin{remark}[Neutrality and economic growth]
The feedback chain \eqref{eq:ge_feedback} rests on the assumption
that wealth taxation reduces productive capital accumulation.  But
\citet{Froeseth2026N} showed that a proportional wealth tax at
market value is \emph{neutral} with respect to portfolio choice:
asset prices, risk-taking, and portfolio allocations are unchanged.
If the progressive tax is constructed as a neutral proportional base
plus a confining surcharge, the base component does not distort
capital allocation.  Only the progressive surcharge $-\kappa(x -
\bar{x})$ introduces distortions.  This means the channel from
taxation to lower $g$ operates only through the progressive
component, not through the full tax rate.  The neutral base reduces
the \emph{level} of wealth but preserves the \emph{composition} of
risk-taking that drives innovation and growth---weakening the
standard objection that wealth taxes discourage entrepreneurship.
In the self-consistent framework, this implies that the GE feedback
\eqref{eq:ge_feedback} may be substantially smaller than a
na\"{\i}ve analysis (which attributes the full tax rate to capital
destruction) would suggest.
\end{remark}

The steady state is determined by the fixed point where the
confining force and the GE feedback balance.  For a Cobb--Douglas
production function $Y = K^a L^{1-a}$ with capital share $a$,
the fixed-point condition is:
\begin{equation}\label{eq:fixed_point}
  v^* = a \bigl(K[\pi^*_{\kappa, v^*}]\bigr)^{a - 1}
    L^{1-a} - \tfrac{\sigma^2}{2} \,,
\end{equation}
where $\pi^*_{\kappa, v^*}$ is the steady state of the FP equation
with drift $v^* - \kappa(x - \bar{x})$.  This is a scalar
fixed-point equation in $v^*$ (since $K$ is a deterministic
functional of the Gaussian steady state), which can be solved by
simple iteration.

\subsubsection{The self-consistent optimal control problem}

The optimal drift design problem (\Cref{def:optimal}) generalises
to the GE setting by replacing the exogenous drift with the
self-consistent drift:
\begin{equation}\label{eq:objective_ge}
  \mathcal{J}_{\mathrm{GE}}[\delta v]
  = \frac{1}{2}\int \bigl|\pi(x,T) - \pi^*(x)\bigr|^2 \, \dd x
  + \frac{\lambda}{2}\int |\delta v(x)|^2 \, \dd x \,,
\end{equation}
subject to \eqref{eq:fp_sc}.  The key difference is that the
forward equation is now nonlinear (through $v[\pi]$), so the adjoint
equation acquires an additional term from the functional derivative
$\delta v / \delta \pi$.  The optimality condition becomes:
\begin{equation}\label{eq:optimality_ge}
  \delta v^*(x)
  = -\frac{1}{\lambda}\biggl[
    \pi(x,T) \cdot p(x,T)
    + \int \frac{\partial v}{\partial K}
      \cdot e^{x'} \pi(x',T) \, p(x',T) \, \dd x'
  \biggr] \,,
\end{equation}
where the second term captures the GE feedback: the optimal policy
accounts for its own effect on the aggregate return.

\subsubsection{Diminishing returns to progressivity}

The GE feedback implies \emph{diminishing returns to progressivity}.
Define the GE Gini reduction per unit of progressivity:
\begin{equation}
  \eta(\kappa) \equiv -\frac{\dd \Gini_{\mathrm{GE}}(\kappa)}
    {\dd \kappa} \,.
\end{equation}
In partial equilibrium, $\eta$ is always positive (more progressivity
always helps).  In general equilibrium, $\eta(\kappa)$ is positive
for small $\kappa$ but decreasing, and may approach zero for large
$\kappa$ if the rise in $r$ fully offsets the confinement.  Whether
$\eta$ actually reaches zero---whether there is a \emph{maximum
achievable redistribution}---depends on the elasticity of the
production function.  For Cobb--Douglas, the fixed point always
exists and $\eta > 0$ for all finite $\kappa$, but the marginal
gain shrinks.

This has a direct policy implication: the GE feedback provides a
natural economic limit on redistribution that is \emph{separate}
from the political constraints encoded in $\lambda$.  Even with
$\lambda = 0$ (no political cost), the production function constrains
how much inequality can be reduced through wealth taxation alone.

\section{Empirical strategy: learning the drift from data}
\label{sec:empirical}

The framework above delivers testable predictions and closed-form
policy formulas, but treats the drift $v$ and diffusion $D$ as known.
In practice, they must be estimated from data.  This section outlines
the estimation strategy; empirical results using wealth register data
will follow in a companion paper.

\subsection{The inverse problem}

Norwegian wealth register data
\citep{FagerengEtAl2020, FagerengGuisoRing2025} provides individual-level
wealth panels spanning two decades, covering the full population.

The estimation problem is: given observations $\{W_i(t_k)\}$ for
individuals $i = 1, \ldots, N$ at times $t_0, t_1, \ldots, t_K$,
recover $v(x)$ and $D(x)$ nonparametrically.

\subsection{Score-based estimation}

A score-based diffusion model learns the score function
$s(x,t) = \partial \log \pi(x,t) / \partial x$ from data.  At
steady state, the score function is related to the FP coefficients by:
\begin{equation}\label{eq:score_fp}
  s_{\mathrm{ss}}(x) = \frac{v(x)}{D(x)} - \frac{D'(x)}{D(x)} \,.
\end{equation}
If $D$ is constant (or independently estimated), the score function
directly recovers the drift:
\begin{equation}\label{eq:drift_from_score}
  v(x) = D \cdot s_{\mathrm{ss}}(x) \,.
\end{equation}

\subsection{Neural SDE approach}

An alternative is to parameterise $v(x)$ and $D(x)$ as neural
networks and train them by maximum likelihood on the panel data,
using differentiable SDE solvers \citep{KidgerDiffrax2021}.  This
directly learns the drift and diffusion functions without going
through the score.

\subsection{Identifying the general equilibrium feedback}

The self-consistent FP equation (\Cref{sec:ge}) predicts that the
drift $v$ responds endogenously to changes in the wealth distribution.
A diffusion model trained on panel data spanning a tax reform captures
this feedback \emph{implicitly}: the learned drift before the reform
reflects the old equilibrium; the drift after reflects the new one,
including whatever adjustment in $r$ occurred through the aggregate
capital channel.

Concretely, let $\hat{v}_{\mathrm{pre}}(x)$ and
$\hat{v}_{\mathrm{post}}(x)$ denote the drift functions estimated
from data before and after a reform that changes the tax rate by
$\Delta\tau$.  In partial equilibrium, the drift change would be a
uniform shift: $\hat{v}_{\mathrm{post}} - \hat{v}_{\mathrm{pre}}
= -\Delta\tau$ for all $x$.  In general equilibrium, the aggregate
capital response adds a positive offset (since $K$ falls and $r$
rises):
\begin{equation}\label{eq:ge_test}
  \hat{v}_{\mathrm{post}}(x) - \hat{v}_{\mathrm{pre}}(x)
  = -\Delta\tau + \Delta r\bigl(K[\pi_{\mathrm{post}}]\bigr) \,,
\end{equation}
where $\Delta r > 0$ is the GE feedback.  This is testable: if the
empirically estimated drift shift is \emph{smaller} than the
mechanical tax change $\Delta\tau$, the GE feedback is present.  The
magnitude of $\Delta r$ identifies the elasticity of the production
function.

\subsection{Natural experiment identification}

The Norwegian wealth tax reforms of 1992 and 1998
\citep{FagerengGuisoRing2025} provide exogenous shocks to the equity
premium that can be used for identification.  A diffusion model trained
on data spanning these reforms should detect changes in the drift
coefficient, and the \emph{form} of the change (uniform shift versus
state-dependent) tests the neutrality hypothesis of
\citet{Froeseth2026N}.

Additional natural experiments are available from other countries.
The 1989 Danish wealth tax reform \citep{JakobsenEtAl2020} provides
a clean before--after comparison for wealth accumulation dynamics.
Swiss cantonal variation \citep{BrulhartEtAl2022, MartiMartinezScheuer2023,
Burgherr2021} offers cross-sectional identification of how different
tax rates affect the stationary distribution.  Norwegian corporate
restructuring responses to valuation discounts \citep{BjornebyEtAl2023}
identify avoidance channels corresponding to the attenuated drift
of \eqref{eq:evasion}: the wedge between intended and realised drift
modification is directly observable.  Finally,
\citet{AlstadsaeterJohannesenZucman2019} show that tax evasion at the
top of the distribution is substantial (approximately 25\% of tax
liability for the top 0.01\%), implying that the observed $\pi(x,t)$
from tax registers underestimates true concentration and that the
evasion rate $\epsilon(x)$ of \Cref{sec:leakage} is empirically large.

\section{Discussion}\label{sec:discussion}

\subsection{What the framework provides}

The Fokker--Planck approach to redistribution offers four advantages
over existing frameworks.  First, it provides a \emph{unified language}
for comparing tax designs: proportional, progressive, transfers, and
caps all correspond to specific modifications of the same equation.
Second, it makes the \emph{speed--magnitude trade-off} explicit: the
spectral gap of the FP operator determines how fast the distribution
converges to its new steady state.  Third, it connects the
\emph{micro-level} dynamics (individual wealth accumulation) to the
\emph{macro-level} outcome (the wealth distribution) through a single
PDE.  Fourth, it models \emph{both redistribution channels}---market
(drift modification) and fiscal (source-sink terms)---within the same
equation, and the general equilibrium extension (\Cref{sec:ge})
captures their interaction through the production function: the tax
design affects aggregate capital, which feeds back into both market
returns and the revenue that funds the fiscal channel.

\subsection{The choice of target distribution}

The optimal drift design problem (\Cref{def:optimal}) requires
specifying a target distribution $\pi^*$.  The mathematical framework
is agnostic about this choice---it can accommodate any target---but
the choice itself is ultimately one of normative political philosophy,
not physics.

The framework does, however, make the consequences of different
normative positions mathematically precise.  A utilitarian planner
maximising total expected utility would choose $\pi^*$ to maximise
$\int u(e^x) \pi^*(x) \, \dd x$ for some concave utility function
$u$, subject to the production constraint.  A Rawlsian planner,
reasoning from behind the \emph{veil of ignorance}
\citep{Rawls1971}---uncertain of their own position in the wealth
distribution---would instead maximise the welfare of the worst-off
individual.  In the Fokker--Planck framework, the veil of ignorance
corresponds precisely to the initial condition: an agent who does not
yet know which realisation of the stochastic process they will
experience.  The Rawlsian maximin criterion then selects the drift
modification $\delta v(x)$ that maximises the minimum wealth level
in the steady-state distribution, subject to the GE constraint that
the production function must still be satisfied.

These are different objective functionals in the same optimal control
problem.  The utilitarian criterion typically yields a less aggressive
confining potential than the Rawlsian criterion, because it values
the surplus of the wealthy (through the concave $u$) whereas the
maximin criterion does not.  The framework makes this trade-off
quantitative: for each normative choice, it delivers the
corresponding optimal progressivity $\kappa^*$, convergence
timescale $t_{1/2}$, and GE feedback magnitude $\Delta r$.

This paper does not advocate for a particular normative position.
Its contribution is to provide the mathematical machinery that
translates any such position into a concrete policy design, and to
make the costs and timescales of that design explicit.

\subsection{Heterogeneous returns and the combined tax environment}

Two companion papers relax the homogeneous-returns assumption
(\Cref{remark:homogeneous_scope}) in ways that qualify---and in some
cases strengthen---the results above.

\citet{Froeseth2026H} introduces persistent heterogeneity in investor
ability, generating drift $v(z)$ that depends on an individual
parameter $z$.  Because a proportional tax preserves drift
\emph{differences}---the quantity $v(z_H) - v(z_L)$ is unchanged by a
uniform shift---it accelerates the reallocation of capital toward
higher-ability investors.  This ``use-it-or-lose-it'' mechanism
\citep{Guvenen2023} means the proportional tax \emph{is} redistributive
through the market channel under heterogeneous returns, even though the
redistribution paradox holds in the homogeneous limit.

The confining potential framework of \Cref{sec:confinement} remains
the appropriate tool for \emph{active} redistribution of the wealth
distribution.  But under heterogeneous returns, the baseline against
which the confining potential operates is itself redistributive: the
proportional tax component already generates reallocation, and the
progressive surcharge adds compression on top.  The optimal
progressivity $\kappa^*$ derived in \Cref{sec:optimal} should
therefore be interpreted as the additional confinement needed beyond
the passive reallocation effect.

\citet{Froeseth2026F} extends the neutrality framework to combined
flow-and-stock taxation, showing that corporate and dividend taxes
enter the Fokker--Planck equation as a rescaling of excess drift
differences by the factor $(1 - \tau_c)(1 - \tau_d)$, while the
wealth tax enters as a uniform shift.  The two effects are additively
separable under natural institutional conditions.  For the taxonomy
of \Cref{sec:taxonomy}, this means that in a combined tax system,
Class~2 modifications (progressive wealth tax) interact with the
flow-tax environment: the effective drift gap between asset classes
depends on both the progressivity parameter $\kappa$ and the
flow-tax rates $\tau_c, \tau_d$.  In particular, a revenue-neutral
shift from flow taxes to wealth taxes \emph{widens} ability-driven
drift gaps \citep{Froeseth2026H}, a consideration that enters the
optimal design problem when the tax mix is itself a policy variable.
That combined-instrument design problem is developed in companion
work: by \citet{Froeseth2026P} under homogeneous returns, and by
\citet{Froeseth2026HP} as the heterogeneous-returns synthesis.

\subsection{Limitations and extensions}

Several important extensions remain for future work.  The OU model
(linear confinement) is a simplification; realistic progressive tax
schedules have nonlinear rate structures that produce non-Gaussian
steady states.  \citet{Froeseth2026P} develops two such
extensions---smooth log-progressive and bracket schedules---within
the combined flow--stock framework; fully general schedule classes
remain open.  The leakage channels formalised in \Cref{sec:leakage}---migration
as a permeable boundary \eqref{eq:migration} and evasion as attenuated
drift \eqref{eq:evasion}---are treated here as exogenous.  A full
analysis would endogenise $\gamma$ and $\epsilon(x)$ as functions of
the tax design itself, coupling the leakage rates to the optimal
control problem.  The experience of the French ISF, where
\citet{Pichet2007} estimated substantial capital flight, and of the
Swiss cantons, where \citet{BrulhartEtAl2022} documented significant
behavioural responses, confirms that leakage is empirically important.

The empirical estimation of $v(x)$ and $D(x)$ from Norwegian data is
the natural next step, and will determine whether the GBM baseline is
an adequate starting point or whether state-dependent coefficients are
essential.  \citet{Zucman2024}'s proposal for a coordinated global
minimum wealth tax on billionaires would, by closing migration channels,
effectively set $\gamma \to 0$ in \eqref{eq:migration} and permit
stronger confinement than national policies alone.

\subsection{Connection to the wealth tax programme}

This paper is part of a series studying wealth taxation through the
Fokker--Planck framework.  \citet{Froeseth2026N} established that a
proportional market-value wealth tax is neutral with respect to
portfolio choice.  \citet{Froeseth2026E} identified the channels
through which neutrality breaks in practice.
\citet{Froeseth2026SP} reformulated both results in Fokker--Planck
language, revealing the drift-shift symmetry as the mathematical
content of neutrality.  \citet{Froeseth2026F} extended the
neutrality framework to combined flow-and-stock taxation, showing
additive separability of flow-tax and stock-tax distortions under
natural institutional conditions.  \citet{Froeseth2026H} introduced
heterogeneous investor ability, showing that the proportional tax is
simultaneously non-distortionary \emph{and} redistributive when
returns vary across investors---qualifying the redistribution paradox
established here to the homogeneous case.

The present paper develops the optimal drift design framework:
it characterises when and how the drift-shift symmetry must be
broken to achieve market-channel redistribution, formulates the
design problem as optimal control of the Fokker--Planck equation,
and models the general equilibrium interaction between the market
and fiscal channels.

Four further companions build directly on the design framework
developed here.  \citet{Froeseth2026D} characterises
minimum-distortion tax designs on the proportional class in
closed form, contrasting an information-theoretic with a
transport-geometric optimality criterion.  \citet{Froeseth2026P} lifts the progressive
drift-design problem to the combined flow--stock environment under
Gini and top-share criteria, including bracket schedules.
\citet{Froeseth2026HP} extends that synthesis to heterogeneous
returns.  \citet{Froeseth2026W} recasts the neutral class and
progressive optimality in 2-Wasserstein geometry---the
transport-geometric counterpart of the drift-design view taken
here.

A companion empirical paper will estimate the drift and diffusion
functions from Norwegian wealth register data using neural
stochastic differential equation methods, testing the theoretical
predictions developed here---including the neutrality hypothesis,
the general equilibrium feedback, and the leakage channels---against
observed wealth dynamics.

\subsection*{Acknowledgements}
The author acknowledges the use of Claude (Anthropic) for assistance with
literature review, \LaTeX{} typesetting, mathematical exposition, and
editorial refinement, and Lemma (Axiomatic AI) for review and proof
checking. All substantive arguments, economic reasoning, and conclusions
are the author's own.

\appendix

\section{Gini coefficient for standard distributions}
\label{app:gini}

For reference, we collect the Gini coefficient formulas for the
distributions appearing in the main text.

\textbf{Pareto distribution} with tail exponent $\alpha > 1$:
\begin{equation}
  \Gini_{\mathrm{Pareto}} = \frac{1}{2\alpha - 1} \,.
\end{equation}

\textbf{Log-normal distribution} with log-variance $\sigma_x^2$:
\begin{equation}
  \Gini_{\mathrm{LN}} = 2\Phi\!\left(\frac{\sigma_x}{\sqrt{2}}\right) - 1
  = \mathrm{erf}\!\left(\frac{\sigma_x}{2}\right) \,.
\end{equation}

\textbf{Exponential distribution} (Boltzmann--Gibbs, following
\citealp{DragulescuYakovenko2000}):
\begin{equation}
  \Gini_{\mathrm{exp}} = \frac{1}{2} \,.
\end{equation}

\textbf{Truncated Pareto distribution} with exponent $\alpha > 1$ and
upper bound $W_{\max}$, to leading order in
$u = (W_{\min}/W_{\max})^{\alpha - 1}$:
\begin{equation}
  \Gini_{\mathrm{trunc}} = \Gini_{\mathrm{Pareto}}
  \bigl[1 - u\bigr] + o(u) \,,
\end{equation}
so truncation reduces the Gini below its untruncated value by the
fraction $u$ at leading order.

\end{document}